\date{\displaydate{date}}
\newtheorem{remark}{Remark}
\newtheorem{assumption}{Assumption}
\newtheorem{theorem}{Theorem}
\pgfplotsset{compat=newest} 
\pgfplotsset{plot coordinates/math parser=false}
\newlength\figH
\newlength\figW
\title{LMI-Based Reset Unknown Input Observer for
State Estimation of Linear Uncertain Systems }
\author{Iman Hosseini$^{a}$, Alireza Khayatian$^{a}$, Paknoush Karimaghaee$^{a}$,\\ Mirko Fiacchini$^{b}$, Miguel Angel Davo$^{b}$,   \\
	\small $^{a}$Shiraz University, Shiraz, Iran \\
	\small $^{b}$Grenoble INP, GIPSA-lab, Grenoble, France \\
}
\providecommand{\keywords}[1]{\textbf{\textit{Keywords---}} #1}
\date{}
\begin{document}

	
		\maketitle
		\thispagestyle{fancy}

		\begin{abstract}
		This paper proposes a novel kind of Unknown Input Observer (UIO) called Reset Unknown Input Observer (R-UIO) for state estimation of linear systems in the presence of disturbance using Linear Matrix Inequality (LMI) techniques. In R-UIO, the states of the observer are reset to the after-reset value based on an appropriate reset law in order to decrease the $L_2$ norm and settling time of estimation error. It is shown that the application of the reset theory to the UIOs in the LTI framework can significantly improve the transient response of the observer. Moreover, the devised approach can be applied to both SISO and MIMO systems. Furthermore, the stability and convergence analysis of the devised R-UIO is addressed. Finally, the	efficiency of the proposed method is demonstrated by simulation results.
		
		\end{abstract}
			\keywords{Reset Theory, Unknown Input Observer, Stability Analysis, State Estimation}

		\section{Introduction}
		The art of unknown input observer design for state estimation of systems with unknown inputs have attracted many attentions in the past decades.
		The problem of design a full-order observer for linear systems subjected to unknown inputs has been investigated in \cite{yang1988observers} and \cite{darouach1994full}. Besides, some research on the reduced order types can be found in \cite{kudva1980observers,guan1991novel,hou1992design,duan2001robust}. 		
		The existence of a UIO is investigated in \cite{darouach1994full,kudva1980observers,guan1991novel,hou1992design,chen1996design,chen2012robust,koenig2001design}, and the necessary and sufficient conditions are presented. Reduced order UIO can be designed using a systematic procedure. In this procedure, the state vector is partitioned into two parts using a linear transformation. Unknown Input directly affects one part and has to be measured completely and the reduced order UIO, which is decoupled from the input, is used to estimate the other part \cite{corless1998state}. The practical systems include unknown inputs such as the parameter perturbation \cite{park2014design}, actuator faults and external disturbance \cite{menon2014sliding} wherein the industrial process all can be viewed as UIs. Therefore, the discussions on UIO design are very important in both theory and applications, especially in the fields of observer-based control \cite{yang2014high,kim2016robust}. The performance of the UIOs in the presence of uncertainty and disturbance is outstanding \cite{hui2013stress,mondal2010lmi,xiong2003unknown}. Therefore, researchers have developed many different kinds of UIO.  In  \cite{chen2006unknown} Linear Matrix Inequalities are used to design  a full-order nonlinear UIO for a class of nonlinear Lipschitz systems with unknown input. Moreover, a reduced order UIO for one-sided nonlinear Lipschitz system is proposed in \cite{zhang2015unknown}. Considering uncertainties in the model, a robust UIO for fault detection using linear parameter varying model is investigated in \cite{li2017robust}.
		
		On the other hand, several control strategies are developed for dynamical systems  in the past decades. However, most of them suffer from having  oscillatory transient responses \cite{esfandiari2015stable,esfandiari2017adaptive}. In order to mitigate this issue and overcome the fundamental limitations of linear controllers, the idea of  reset control theory, in which a reset mechanism on the states of the controller is introduced, can be utilized. The idea of reset control originates from the Clegg Integrator (CI) which is aimed at overcoming the disadvantages of the traditional integrators. The state of CI is reset to zero when the input crosses zero \cite{clegg1958nonlinear}. This idea is extended to the first order element and the First Order Reset Element (FORE) is developed \cite{horowitz1975non}. However, answering the basic questions about well-posedness and stability has been the main problem in further developments of the reset control, for several years. In the late 1990s, the works of Chait, Hollot et al.  initiated a new session in reset control theory and gave a significant improvement in the field. They used the state space representation rather than transfer functions and started to answer such questions \cite{chait2002horowitz}. Since then on, this field has been an appealing research field and a number of international research groups have been working actively in this area and it turns to an attractive control design method with a significant potential for practical applications \cite{banos2011reset,beker2004fundamental,guo2012stability,davo2013reset}. 
		The stability problem and performance issue of the reset control systems is addressed in \cite{carrasco2013towards,banos2014network}. Based on the well-posedness of reset instants, the necessary and sufficient conditions for existence and uniqueness of solutions is developed in \cite{banos2016impulsive}. The problem of the global exponential stability of reset systems is discussed in \cite{fiacchini2016constructive}. A piecewise quadratic Lyapunov function is used in \cite{zhao2015reset} to deal with the stability of the reset control systems with uncertain outputs. In \cite{ghaffari2014reset}, a linear uncertain system with uncertain nonlinear terms is considered and a robust reset control law was designed but the stability at reset instances is not discussed. It is worth mentioning that all of the aforementioned systems can be viewed as nonlinear systems with the Lipschitz nonlinear term. In \cite{fiacchini2012quadratic} and \cite{fiacchiniHSC13}, systems with saturations and nested saturations is considered and quadratic and exponential stability is investigated in them  respectively.
		
		In the same way, a reset observer can be designed by applying reset mechanism into a traditional observer. A reset observer is a nonlinear observer consisting of a base observer and a reset law that resets the states of the observer when some predefined reset conditions are satisfied. The application of reset observers was first proposed in \cite{paesa2010reset}, in which a new type of adaptive observer is proposed. The designed observer is called Reset Adaptive Observer. In this observer, the integral term has been substituted by a reset element. In \cite{paesa2011optimal}, an optimal reset adaptive observer is designed, in which the observer parameters are chosen by solving an optimization problem. In this observation scheme, the reset conditions are  zero crossing  and sector condition. Furthermore, reset observers have been improved and extended to nonlinear systems, Multi Input Multi Output systems, and time-varying delayed systems \cite{paesa2011reset,paesa2012reset}. Besides, an adaptive reset observer method is proposed in \cite{zhou2014adaptive} and a reset observer based on the delay-dependent approach is developed in \cite{meng2015fast}. A single Lyapunov function is considered for stability analysis because the closed-loop error dynamics is hybrid, the Lyapunov function in both flow set and jump set should be decreasing. Note that this approach  is very conservative. For obtaining less conservative results, piecewise Lyapunov functions are used, in this approach, each Lyapunov function is only needed to be decreasing in a region of the state space. As a result, less conservative stability analysis results are obtained. In \cite{meng2015fast} a Proportional-Integral observer is used and reset strategy is applied to the observer for fault detection purposes. The conventional reset law (zero crossing) and after reset value (jump to zero) is used and the results demonstrate that the fault estimation and the residual convergence to zero can be strikingly improved. 
		
		In this paper, reset strategy is extended to the UIO and a novel sort of UIOs called Reset UIO is proposed. A suitable after-reset value along with a proper jump sector is obtained using LMI approach. Furthermore, the stability analysis for the reset error dynamics is given. A R-UIO is designed in two steps wherein the first step, for our main purpose we characterize the case assuming that all the system states are available. Then the reset law is designed by LMI and the parameters are obtained. In the second case, it is assumed that only the outputs are available but the bounds on the estimation errors are known and the R-UIO is designed. It has been shown that exploiting the reset mechanism in the UIO can improve the performance of the observer. 
		
		The remainder of the paper is organized as follows: in Section \ref{sec:problem}, conventional approach of designing the base UIO is investigated. In Section \ref{sec:R-UIO}, the reset UIO in the ideal case is designed first and then the non-ideal case is presented and stability analysis is provided. In Section \ref{sec:simulation}, simulation results are displayed to validate the performance of the proposed observation strategy. Finally, the concluding remarks are provided in Section \ref{sec:conclusion}.

	\section{Conventional UIO (C-UIO)} \label{sec:problem}
		Consider the system:
		
		\begin{align} \label{general-sys}
			\left\{
				\begin{array}{ccl}
					\dot{x}  & = & Ax+Bu+Dv \\
					y & = & Cx
				\end{array}
			\right.
		\end{align}
		
	\noindent	where $x \in {\rm I\!R}^n$, $u \in  {\rm I\!R}^q$, $v \in {\rm I\!R}^m$ and $y \in {\rm I\!R}^p$ are the state vector, known input vector, unknown input vector and output of the system respectively. $A,B,C$ and $D$ are known matrices with appropriate dimension. Without loss of 		generality, it is assumed that $D$ is of full column rank \cite{chen2012robust}. 

	For the state estimation of the aforementioned system a full-order C-UIO can be defined as:
		\begin{align} \label{C-Obs}
			\left\{
				\begin{array}{ccl}
					\dot{Z}     & = & NZ+Gu+Ly \\
					\hat{x}     & = & Z-Ey\\
					\end{array}	
					\right.		
		\end{align}
		where $Z\in {\rm I\!R}^n$ is the state of this full-order observer, $\hat{x}\in {\rm I\!R}^n$ is the estimated state vector and $N,G,L,E$ are design matrices for unknown input decoupling goal and other required performances. The parameters of the C-UIO observer can be obtained using \cite{chen2006unknown}:
			
		\begin{align}
		 \label{UIO-Cond}
			\left\{
			\begin{array}{ccl}
						N  & = & MA-KC\\
						G  & = & MB\\
						L  & = & K(I+CE)-MAE\\
						M  & = & I+EC \\
						MD & = & 0\\
			\end{array}
			\right.
		\end{align}
			It is assumed that  $rank(CD)=rank(D)$ and the pair $ (C,MA)$ is detectable. Using the last equation in (\ref{UIO-Cond}), $E$ can be obtained as $$	E   = -D(CD)^+ +Y(I-(CD)(CD)^+)$$
			 in which, $(CD)^+$ is defined as $(CD)^+=((CD)^T(CD))^{-1}(CD)^T$ and Y can be arbitrarily chosen,	and $K$ is a chosen such that $N$ is Hurwitz \cite{chen2012robust}.

	Defining the estimation error as $$e=\hat{x}-x$$
		\noindent the continuous error dynamics can be obtained as:
		\begin{align*}
			\dot{e}& =\dot{\hat{x}}-\dot{x}=Ne
		\end{align*}	
	 The above error dynamics indicates that the estimation error converges asymptotically to zero and thus $\hat{x}\longrightarrow x$.
	 
	In the next section, the reset theory is used to introduce a nonlinear observer which can reduce the $L_2$ norm and settling time of the estimation error.

	\section{Reset UIO (R-UIO)} \label{sec:R-UIO}
	In this section, R-UIO which is a novel kind of UIO is proposed to estimate the states more rapidly and accurately. The design steps are divided into two cases. In the first case which is called R-UIO with full-state measurement (or ideal case), it is assumed that all the system states can be measured. Then this case is extended to the second approach named R-UIO with partial state measurement (or non-ideal case) in which only the outputs are available.

			\subsection{R-UIO with full-state measurement}
	In this part reset action is added to the C-UIO to improve the performance of the observer. Thus, the R-UIO can be formulated as:		

		\begin{align} \label{R-Obs}
					\nonumber
					\left\{
					\begin{array}{ccl}
					\dot{Z}     & = & NZ+Gu+Ly \\
					\hat{x}     & = & Z-Ey\\
					\hat{y}     & = & C\hat{x}
					\end{array}
					\hspace{110pt}
					\right\} 
					\text{if} \hspace{5pt}
					e\in \mathcal{F} \hspace{5pt}		   	
					\\
					\left\{
					\begin{array}{ccl}
					Z^+     & = & (M-A_REC)Z-(I-A_R)MEy\\
					\hat{x}^+     & = & 	Z^+-Ey\\
					\end{array}
					\hspace{4pt}
					\right\} 
					\text{if}	 \hspace{5pt}
				e\in \mathcal{J}  \hspace{5pt}
	\end{align}
in which $A_R$ is the after reset matrix, $\mathcal{F} =\{e\in {\rm I\!R}^n|e^TFe \geq 0\}$ is the flow set and $\mathcal{J} =\{e\in {\rm I\!R}^n|e^TFe \leq 0\}$ is the jump set and as soon as 	$e\in \mathcal{J}$ jump will happen. It's worth noting that $F$ and $A_R$  will be obtained by solving some inequalities. Note that in the defined R-UIO, all the system states are required to define the jump set.

	For the discrete error dynamics one has:
	\begin{align}
		e^+&=\hat{x}^+-x \nonumber\\
		&=Z^+-Ey-x=Z^+-(I+EC)x
	\end{align}
substituting $Z^+$ from \eqref{R-Obs} results in
	\begin{align}
		  e^+ &=(M-A_REC)Z-(I-A_R)MECx-(I+EC)x 
	 \end{align}
	 using $Z=\hat{x}+ECx$ implies that
    \begin{align}
    	e^+=M(\hat{x}+ECx)-A_REC(\hat{x}+ECx)-MECx+A_RMECx-Mx
    \end{align}
    simplifying the equation leads to
    \begin{align}
		 e^+  &=Me-A_REC\hat{x}+A_RECx	  
	\end{align}
	adding and subtracting $A_Re$, $e^+$ can be obtained as:
	\begin{align}
	e^+	&=Me-A_R(I+EC)e+A_Re  \nonumber\\
		&=(A_R-A_RM+M)e.
	\end{align}
	 Therefore, defining $H=A_R-A_RM+M$, the error dynamics can be written as:
		\begin{align} \label{error-dynamic}
			\left\{
			\begin{array}{ccccc}
				\dot{e} & = & Ne & \text{if} & e\in \mathcal{F}  \\
				 e^+ & = & He & \text{if} & e\in \mathcal{J}  
			\end{array}
			\right.
		\end{align}

	Based on reset error dynamics the following theorem on the convergence of R-UIO can be stated:
	\begin{theorem} \label{trm:stab}
		For the system (\ref{general-sys}), if there exists symmetric matrices $P>0$,$F$ and matrix $Q$ and constants $\lambda_f,\tau_f,\tau_j,\tau_w>0$ and $0<\lambda_j\leq1$ such that 
		\begin{subequations}
			\begin{align} 
			\label{LMI-stab1} 
			N^TP+PN+\lambda_fP +\tau_f F&<0   \vspace{15pt}\\ 
			\begin{bmatrix} \label{LMI-stab2}
			\lambda_jP+\tau_jF & (Q-QM+PM)^T\\
			Q-QM+PM		  & P
			\end{bmatrix}&\geq 0 \vspace{15pt}\\  
			H^TFH+\tau_wF&>0 \label{LMI-stab3}
			\end{align}
		\end{subequations}
		the error dynamics is well-posed and the R-UIO given by (\ref{R-Obs}) makes the error converges to zero asymptotically for any initial condition.
	\end{theorem}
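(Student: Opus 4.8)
The plan is to use a single quadratic Lyapunov function $V(e)=e^{T}Pe$ with $P>0$ and to establish three facts: $V$ decays along flows, $V$ does not grow across jumps, and resets cannot accumulate; asymptotic convergence of $e$ then follows. The recurring device throughout is the S-procedure, which converts the set-membership constraints $e\in\mathcal{F}$ and $e\in\mathcal{J}$ into the sign conditions $e^{T}Fe\geq0$ and $e^{T}Fe\leq0$ respectively. The three hypotheses \eqref{LMI-stab1}--\eqref{LMI-stab3} are precisely the S-procedure/Schur-complement certificates for these three facts, so the bulk of the work is in assembling them correctly rather than in any long computation.

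For the flow, along $\dot e=Ne$ one has $\dot V=e^{T}(N^{T}P+PN)e$. Pre- and post-multiplying \eqref{LMI-stab1} by $e^{T}$ and $e$ gives $\dot V+\lambda_f V+\tau_f\,e^{T}Fe<0$ for all $e\neq0$; since $\tau_f>0$ and $e^{T}Fe\geq0$ on $\mathcal{F}$, the last term is nonnegative and hence $\dot V<-\lambda_f V$, i.e.\ exponential decay on every flow interval. For the jump, I would introduce the change of variable $Q=PA_R$, under which $Q-QM+PM=P(A_R-A_RM+M)=PH$, so the off-diagonal block of \eqref{LMI-stab2} is exactly $PH$. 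As $P>0$, a Schur complement on \eqref{LMI-stab2} is equivalent to $\lambda_j P+\tau_j F-H^{T}PH\geq0$, and multiplying by $e^{T}$ and $e$ with $e^{+}=He$ yields $V(e^{+})\leq\lambda_j e^{T}Pe+\tau_j\,e^{T}Fe$. On $\mathcal{J}$ we have $\tau_j>0$ and $e^{T}Fe\leq0$, so $V(e^{+})\leq\lambda_j V(e)\leq V(e)$ because $0<\lambda_j\leq1$; thus $V$ never increases at a reset.

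For well-posedness, I would apply the S-procedure to \eqref{LMI-stab3}: on the jump boundary ($e^{T}Fe=0$) it gives $(e^{+})^{T}Fe^{+}=e^{T}H^{T}FHe>0$, so the after-reset state lies in the interior of $\mathcal{F}$. Consequently two jumps cannot occur in immediate succession, and after each reset the trajectory must flow for a strictly positive time; this rules out deadlock and gives well-posedness of solutions. Together with the preceding two steps, $V$ is then nonincreasing along any solution and strictly decreasing over each flow interval.

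The delicate point, which I expect to be the main obstacle, is the final convergence argument rather than the three inequalities. Since $V$ is nonincreasing and bounded below it converges to some limit, and to force that limit to be zero one must show that the accumulated flow duration diverges as $t\to\infty$, i.e.\ that the reset times do not Zeno-accumulate. The certificate \eqref{LMI-stab3} only prevents consecutive jumps; ruling out shrinking inter-jump intervals requires exploiting that $\mathcal{F}$ and $\mathcal{J}$ are cones and the flow and reset maps are linear, so the error dynamics is homogeneous and the analysis reduces to the compact unit sphere. A compactness argument there yields a uniform positive lower bound on the inter-jump flow time, and this dwell-time bound combined with the exponential flow decay of $V$ closes the proof, giving $V\to0$ and hence $e\to0$ for every initial condition.
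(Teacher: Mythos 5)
Your derivation of the three certificates is exactly the paper's: the S-procedure on $e^TFe\geq 0$ turns $\dot V<-\lambda_f V$ into \eqref{LMI-stab1}; the Schur complement of \eqref{LMI-stab2} together with the congruence by $\mathrm{diag}(I,P)$ and the substitution $Q=PA_R$ (so the off-diagonal block is $PH$) recovers $H^TPH-\lambda_jP-\tau_jF\leq 0$ and hence $V(e^+)\leq\lambda_jV(e)$ on the jump set; and \eqref{LMI-stab3} is the S-procedure certificate for $(e^+)^TFe^+>0$ — which, note, holds on all of $\mathcal{J}\setminus\{0\}$, not only on the boundary $e^TFe=0$ as you state, and this stronger fact is what your later argument actually uses. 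Where you genuinely diverge from the paper is the final step: the paper's proof stops as soon as \eqref{LMI-stab3} is derived ("and this completes the proof"), implicitly taking flow decay, jump decay, and post-jump membership in the interior of $\mathcal{F}$ as sufficient, without ever ruling out Zeno accumulation of reset times. Your homogeneity/compactness argument supplies exactly that missing piece: since $\mathcal{F}$ and $\mathcal{J}$ are cones and the dynamics are linear, and since the strict inequality \eqref{LMI-stab3} forces $He\neq 0$ for $e\in\mathcal{J}\setminus\{0\}$, the normalized post-jump image of the compact set $\{\|e\|=1,\ e^TFe\leq 0\}$ is a compact subset of the open flow cone, yielding a uniform dwell time $\delta>0$ between resets; combined with $\dot V\leq-\lambda_fV$ during flow and $\lambda_j\leq 1$ at jumps, this gives geometric decay of $V$ whether the number of jumps is finite or infinite. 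The paper only confronts the timing subtlety later, via the destabilizing example and the $\epsilon$-margin condition of Theorem \ref{trm:modified-stab}; so your proposal follows the paper's route for the LMI certificates but is a strictly more complete proof of the convergence claim itself.
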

	\begin{proof}

   To prove the quadratic stability, consider the following Lyapunov function:
   \begin{align}
   \begin{array}{ccl}
   V(e)      & =e^TPe\\
   \end{array}
   \end{align}
   where $P=P^T>0$. The error dynamics (\ref{error-dynamic}) is asymptotically stable if:
   
   \begin{align} \label{Stab-cond}
   \left\{
   \begin{array}{cclrc}
   \dot{V}(e) & <  & -\lambda_fV(e) & \text{if}& e^TFe\geq 0\\  
   V(	e^+) & \leq & \lambda_jV(e) & \text{if}& e^TFe\leq 0 .\\  
   \end{array}
   \right.
   \end{align}
   The continuous error dynamics inequality in (\ref{Stab-cond})  can be rewritten as:
   \begin{align}
   \label{LMI-1}
   \dot{V} & =\dot{e}^TPe+e^TP\dot{e} <-\lambda_fe^TPe \nonumber\\
   & = (Ne)^TPe+e^TP(Ne) < -\lambda_fe^TPe \nonumber \\
   & = e^T(N^TP+PN+\lambda_fP)e<0 \nonumber \\
   \end{align}
   if $e^TFe\geq0$ holds.
   Using the S-procedure and taking $ e^TFe\geq 0$ into account results in  
   \begin{align}
   N^TP+PN+\lambda_fP +\tau_f F<0, \quad \tau_F\geq0
   \end{align}
   
   Similarly for the discrete error dynamics it is stated that
   \begin{align}
   \label{LMI-2}
   V(e^+)-\lambda_jV(e) \leq  0  \nonumber\\
   (He)^TP(He)-\lambda_je^TPe\leq 0 \nonumber\\
   e^T(H^TPH-\lambda_jP)e\leq 0 \nonumber\\
    H^TPH-\lambda_jP\leq 0 
    \end{align}
    when $e^TFe\leq0$ is satisfied, and with the aid of S-procedure the condition  $ e^TFe\leq0$ can be added to (\ref{LMI-2}) as:
    \begin{align}  \label{LMI-2-S-pro}
    H^TPH-\lambda_jP-\tau_j F\leq 0 
     \end{align}  
   Using the Schur complement lemma the inequality (\ref{LMI-2-S-pro}) can be rewritten as:
   \begin{align}
   \label{LMI-2-rw}
   \begin{bmatrix}
   \lambda_jP+\tau_jF & H^T\\
   H		  & P^{-1}
   \end{bmatrix}\geq 0
   \end{align} 	
    pre and post multiplying (\ref{LMI-2-rw}) by 
   \[
   \begin{bmatrix}
   I & 0\\
   0 & P
   \end{bmatrix}
   \] 
   results in
   \begin{align} \label{eq:LMI-2-rw2}
   \begin{bmatrix} 
   \lambda_jP+\tau_jF & H^TP\\
   PH		  & P
   \end{bmatrix}\geq 0.
  \end{align}
  Replacing $H$ in the (\ref{eq:LMI-2-rw2})	results in:
    \begin{align} \label{eq:LMI-2-rw3}
    \begin{bmatrix} 
    \lambda_jP+\tau_jF & A_R^TP-M^TA_R^TP+M^TP\\
    PA_R-PA_RM+PM		  & P
    \end{bmatrix}\geq 0
    \end{align}
The inequality (\ref{eq:LMI-2-rw3}) is not linear since it contains multiplication of unknown parameters $P$ and $A_R$. Therefore, using the variable change $Q=PA_R$, one gets
   
   \begin{align} \label{LMI-2-rw2}
   \begin{bmatrix}
   \lambda_jP+\tau_jF & (Q-QM+PM)^T\\
   Q-QM+PM		  & P
   \end{bmatrix}\geq 0
   \end{align}
Moreover, for the well-posedness of the system it is required that after a jump, the error trajectory jumps out of the jump set i.e:
 \begin{align} \label{LMI-3}
 \begin{array}{ccl}
	 (e^+)^TF(e^+)>0 & \text{if} & e^TFe\leq 0 
	 \end{array}
 \end{align}
 thus, using S-procedure
  \begin{align} \label{LMI-3-2}
  \begin{array}{ccl}
  H^TFH+\tau_wF>0 
  \end{array}
  \end{align}
must holds and this completes the proof.
\end{proof}
\begin{remark}
	It's worth mentioning that the inequality (\ref{LMI-3-2}) is checked a posteriori, in practice. It means that as $H$ and $F$ are obtained previously in (\ref{LMI-stab1}) and (\ref{LMI-stab2}), if there is $\tau_w$ such that the inequality (\ref{LMI-3-2}) holds then the system is well-posed and in this case, the reset will be applied to the system.
\end{remark}

As it has been mentioned before, the ideal case is considered to design the matrices $F$, $P$ and $A_R$. It means that if all the states are available the mentioned matrices can be obtained by solving the LMIs \eqref{LMI-stab1} and \eqref{LMI-stab2}. But the problem with the designed R-UIO in (\ref{R-Obs}) is that the flow and jump sets depend on the estimation error $e$ which is not available in general. Moreover, in this observer, the inequality (\ref{LMI-3-2}) should be checked a posteriori and it may not be satisfied in some cases.


\subsection{R-UIO with partial state measurement}
So far it has been assumed that all the states can be measured to design the reset law parameters. Although, the estimation errors are available, this is not the case since only some of them can be measured in practice and an observer is designed to estimate the unmeasured states. The problem in the ideal case formulation is that the error is used to decide whether jump happens or not ($e^TFe\leq 0$), but it is not available in general. To cope with this problem assume that error bounds are available instead of the exact error and use these bounds to decide about jump instants.
\begin{assumption} \label{ass:convex-comb}
	Suppose that a polytope $\mathcal{S}\subset {\rm I\!R}^n$ is known such that $ e(t_0)\in \mathcal{S}$. Denote with $e_{v_i}$ its vertices and $i=1,..,N_v$ where $N_v$ is the number of vertices.
\end{assumption}
\begin{remark}
	Assumption 1 could be relaxed to just suppose to know a bound on $e(0)$. In fact, if a general non-polytopic boundary set is known (for instance a bound on the norm) then it is possible to find a polytope including the boundary set.
\end{remark}
Note that Assumption \ref{ass:convex-comb} implies that $e(0)$ is the convex combination of $e_{v_i}$, i.e. $\exists \alpha_{v_i}\geq 0$ such that $e(0)=\sum_{i}^{}\alpha_{v_i} e_{v_i}$ and $\sum_{i}^{}\alpha_{v_i}=1$.
 ًRemember that from Assumption \ref{ass:convex-comb} it is supposed that the vertices of the bounding set containing the $e(0)$ are known. Given a vertex as the initial condition and the set of reset instants then this provides a trajectory which we call $e_{v_i}(t)$. Our objective is to give a criterion such that

  1. $\quad e(t)\subseteq \textbf{conv} \{e_{v_i}(t)\}$ for $t \in {\rm I\!R}^+$ and

  2. All the trajectories $e_{v_i}(t)$ are bounded and converge to 0.
  
 Thus, the convergence of them results in the convergence of the real error. Therefore, it is necessary to design an appropriate reset law such that the stability of observer is guaranteed.
 
For example, suppose that only one of the states is not available but the error bounds are known, let's say the estimation error of this state is in the interval $[e_{v_1}, e_{v_2}]$. Starting from the vertices $e_{v_1}$ and $e_{v_2}$ results in two error trajectories $e_{v_1}(t)$ and $e_{v_2}(t)$ (dash-lines in Figure \ref{bound-traj}). Suppose that jump happens at $t_2$, at this moment although the trajectory $e_{v_2}(t)$ is inside the jump sector, the trajectory $e_{v_1}(t)$ is outside and in this case stability is not guaranteed. 

On the other hand, suppose that the jump set is $\mathcal{J}=\{e\in {\rm I\!R}^n|e^TFe \leq 0\}$ and reset will happen when $e_{v_1},e_{v_2}\in \mathcal{J}$. This means that both error trajectories should be inside the jump sector and jump happens at $t_1$. Even in this case, the convergence of $e(t)$ might not be satisfied. Since the behavior of the $e_{v_2}(t)$ while flowing in the jump set in the interval $t_1-t_2$ is not known and if the reset happens in a wrong moment it may destabilize the system.


An example is given to demonstrate the importance of choosing the right jump moment even if there is only one single vertex and all the states are measured.

  \begin{figure} [!ht]
  	\centering
  	\includegraphics[width=8cm,height=6cm]{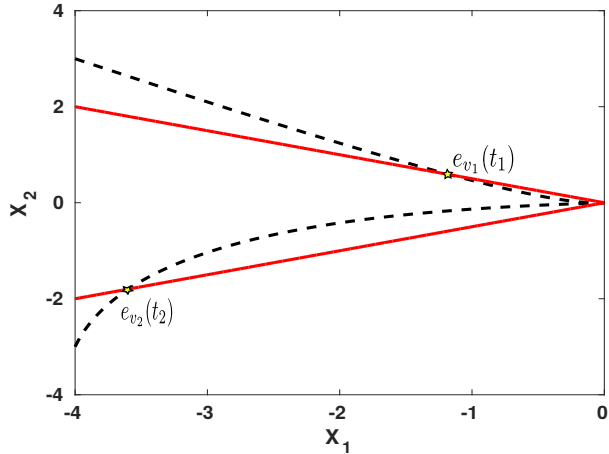}
  	\caption{boundary trajectories and jump sector}
  	\label{bound-traj}
  \end{figure}
  \textbf{Illustrative example}: Consider the second order reset system given by
  		\begin{align} 
  		\left\{
  		\begin{array}{ccccc}
  		\dot{e} & = & Ne & \text{if} & e^TFe\leq 0\\
  		e^+ & = & He  &  \text{if} & e^TFe \geq 0 \\
  		e(0)&=&[-15 ,10]^T
  		\end{array}
  		\right.
  		\end{align}
  		in which
  			\[
  			N=\begin{bmatrix}
  			-0.1 & 1 \\
  			-1 & -0.1\\
  			\end{bmatrix}, 
  			H=\begin{bmatrix}
  			0 & 0.4\\
  			-2 & 0 \\
  			\end{bmatrix},
  			F=\begin{bmatrix}
  			0 & 1 \\
  			1 & 0
  			\end{bmatrix}
  			\]
  			In this example, the error trajectory starts from the initial condition and the wrong jump happens when the trajectory is inside the jump sector and $|e_2|=0$. In Figure \ref{fig:unstable-sys} the phase portrait and the Lyapunov function $V=e^TPe$ with $	P=\begin{bmatrix}
  			1.3296 & 0 \\
  			0 & 0.2924
  			\end{bmatrix}$  of the system demonstrate that a wrong jump instant in the jump set can destabilize the system. Therefore, it's very important to choose the jump moment carefully. It is worth mentioning that in this example there is just one error trajectory since the initial value of both states are exactly known.
  			  \begin{figure} [!ht] 
  			  	\begin{subfigure}[t]{0.45\textwidth}
  			  		\centering
  			  		\setlength{\figW}{4.5cm}
  			  		\setlength{\figH}{4cm}			
%
%
\definecolor{mycolor1}{rgb}{0.00000,0.44700,0.74100}%
\begin{tikzpicture}

\begin{axis}[%
width=0.951\figW,
height=\figH,
at={(0\figW,0\figH)},
scale only axis,
xmin=-80,
xmax=60,
xlabel style={font=\color{white!15!black}},
xlabel={$\text{e}_\text{1}$},
ymin=-100,
ymax=150,
ylabel style={font=\color{white!15!black}},
ylabel={$\text{e}_\text{2}$},
axis background/.style={fill=white},
legend style={legend cell align=left, align=left, draw=white!15!black}
]
\addplot [color=mycolor1, line width=2.0pt]
  table[row sep=crcr]{%
-15	10\\
-14.1685149901306	10.952510787305\\
-13.2783265617932	11.8385573947375\\
-12.3345893802521	12.6546599341158\\
-11.3426533801663	13.3977285609366\\
-10.3080348687942	14.0650735176214\\
-9.2363871927711	14.6544130760254\\
-8.13347111851496	15.1638793679088\\
-7.00512507574177	15.5920221026066\\
-5.85723541226912	15.9378101815346\\
-4.69570680627271	16.200631229391\\
-3.52643297945369	16.3802890718926\\
-2.35526785119879	16.4769991995979\\
-1.18799726979464	16.4913822667449\\
-0.0303114521200598	16.4244556830575\\
1.11222174198113	16.2776233650847\\
2.23418258021849	16.0526637218109\\
3.33032415912587	15.751715956974\\
4.39559592871444	15.3772647777174\\
5.4251659403445	14.9321236058574\\
6.41444172525635	14.4194163941396\\
7.35908971916038	13.8425581553664\\
8.25505315655587	13.2052343171848\\
9.09856836698461	12.5113790196074\\
9.88617941418181	11.7651524759933\\
10.6147510280189	10.9709175212276\\
11.2814797881919	10.133215473199\\
11.8839035277513	9.25674143538805\\
12.4199089337482	8.34631916943681\\
12.8877373314389	7.40687566698307\\
13.2859886476055	6.44341554981335\\
13.6136235575638	5.46099542652655\\
13.8699638293058	4.46469833242051\\
14.0546908869129	3.45960837722922\\
14.1678426238429	2.45078572266877\\
14.2098085048983	1.44324200851568\\
14.1965943877999	0.72709046091191\\
0.233758970629196	-28.3793373960424\\
-1.18065682290577	-28.2210702857639\\
-2.57872439130716	-27.993326500008\\
-3.95708234489774	-27.6973359764224\\
-5.31245065750505	-27.3344886537305\\
-6.64163814584566	-26.9063294626967\\
-7.94154965472961	-26.4145529746782\\
-9.20919293279653	-25.8609977244633\\
-10.4416851844197	-25.2476402247528\\
-11.6362592843628	-24.5765886902555\\
-12.7902696427509	-23.8500764899273\\
-13.9011977089126	-23.070455346393\\
-14.966657103668	-22.2401883020443\\
-15.9843983706699	-21.361842471712\\
-16.9523133384515	-20.4380816021564\\
-17.8684390858942	-19.4716584589162\\
-18.7309615048922	-18.4654070612947\\
-19.5382184550656	-17.4222347864495\\
-20.2887025064464	-16.3451143636797\\
-21.1124529616703	-15.0120280716329\\
-21.850483217017	-13.6397498002406\\
-22.5010953603214	-12.2336884996838\\
-23.0629269876151	-10.7993139997767\\
-23.5349523272374	-9.34213653130465\\
-23.9164821333601	-7.86768633343169\\
-24.2071623632637	-6.38149341997675\\
-24.4069716569462	-4.88906757595601\\
-24.5162176417599	-3.39587865414777\\
-24.5355320887732	-1.9073372395583\\
-24.4989026457104	-0.920195312389524\\
-0.269714436248023	48.9672113923773\\
1.68414689821603	48.7530702146599\\
3.62049542874834	48.4620001266132\\
5.53634396129584	48.0950527896616\\
7.42876095755548	47.6533941528847\\
9.29487483670752	47.1383017087827\\
11.1318781479624	46.5511615901073\\
12.9370316081768	45.8934655136433\\
14.7076679990557	45.1668075770208\\
16.4411959187245	44.3728809148341\\
18.1351033827302	43.5134742205206\\
19.7869612698094	42.5904681406197\\
21.3944266080486	41.605831548188\\
22.9552456973489	40.5616177022909\\
24.467257064406	39.4599603006214\\
25.9283942467125	38.3030694324145\\
27.3366884023909	37.0932274389345\\
28.6902707429733	35.8327846889038\\
29.9873747865479	34.5241552763247\\
31.226338429003	33.1698126482133\\
32.4056058314099	31.7722851698209\\
33.5237291218943	30.3341516349628\\
34.5793699106622	28.8580367291024\\
35.5713006171516	27.346606452859\\
36.4984056085972	25.8025635136133\\
37.3596821496006	24.2286426928769\\
38.1542411626073	22.6276061970723\\
38.8813077994992	21.0022389993422\\
39.5402218248109	19.3553441799594\\
40.1304378113792	17.6897382728558\\
40.6515251495319	16.0082466257229\\
41.1031678712136	14.313698781055\\
41.4851642907324	12.6089238854205\\
41.7974264640965	10.8967461341421\\
42.0399794691858	9.17998025845924\\
42.2129605092743	7.46142706212015\\
42.3166178426872	5.74386901422199\\
42.3513095416324	4.03006590497318\\
42.3175020835027	2.32275057090072\\
42.2475348310912	1.04814819236785\\
0.249849878345827	-84.4315375563677\\
-2.27877213784419	-84.1606722850773\\
-4.79094859982169	-83.8149954851964\\
-7.28449393906286	-83.3952461519068\\
-9.75725191700877	-82.9022262779461\\
-12.2070974144589	-82.3367997678125\\
-14.6319381825581	-81.6998913019768\\
-17.0297165540064	-80.9924851524159\\
-19.3984111131658	-80.2156239508195\\
-21.7360383237829	-79.3704074108558\\
-24.0406541130907	-78.4579910059172\\
-26.3103554111013	-77.4795846037985\\
-28.5432816439478	-76.436451059792\\
-30.7376161801816	-75.3299047697101\\
-32.8915877289803	-74.161310184379\\
-35.003471689272	-72.9320802871673\\
-37.0715914488304	-71.6436750361432\\
-39.0943196324468	-70.2975997724736\\
-41.0700792983363	-68.8954035967007\\
-42.9973450819871	-67.4386777145527\\
-44.8746442867143	-65.9290537539625\\
-46.7005579202311	-64.3682020549835\\
-48.4737216766051	-62.7578299343101\\
-50.1928268630184	-61.0996799261203\\
-51.8566212708047	-59.3955280009715\\
-53.4639099902916	-57.6471817644899\\
-55.0135561690252	-55.8564786376031\\
-56.5044817130148	-54.0252840200709\\
-57.9356679306824	-52.1554894390762\\
-59.3061561192609	-50.2490106846402\\
-60.6150480934341	-48.3077859336279\\
-61.8615066560699	-46.3337738641119\\
-63.0447560109462	-44.3289517618591\\
-64.164082117427	-42.2953136207024\\
-65.218832987096	-40.2348682385576\\
-66.2084189224088	-38.149637310835\\
-67.1323126974771	-36.0416535229922\\
-67.9900496811495	-33.9129586439622\\
-68.7812279026059	-31.7656016221822\\
-69.5055080597328	-29.6016366859358\\
-70.1626134705961	-27.4231214497061\\
-70.7523299683813	-25.2321150282251\\
-71.2745057402157	-23.0306761598862\\
-71.7290511103386	-20.8208613411677\\
-72.1159382681331	-18.6047229736995\\
-72.4352009415799	-16.3843075255796\\
-72.6869340167381	-14.1616537085297\\
-72.8712931039075	-11.938790672452\\
-72.9884940511668	-9.71773621892663\\
-73.0388124060304	-7.50049503516325\\
-73.0225828260052	-5.28905694989183\\
-72.9401984388747	-3.08539521265081\\
-72.7921101535744	-0.891464797900454\\
-0.0650608926267182	145.4564653828\\
};

\addplot [color=red, draw=none, mark=asterisk, mark options={solid, red}]
  table[row sep=crcr]{%
-15	10\\
};

\addplot [color=red, draw=none, mark=triangle, mark options={solid, rotate=270, red}]
  table[row sep=crcr]{%
-0.0650608926267182	145.4564653828\\
};

\addplot [color=red]
  table[row sep=crcr]{%
-80	0\\
60	0\\
};

\addplot [color=red]
  table[row sep=crcr]{%
0	-100\\
0	150\\
};

\end{axis}

\begin{axis}[%
width=1.227\figW,
height=1.227\figH,
at={(-0.16\figW,-0.135\figH)},
scale only axis,
xmin=0,
xmax=1,
ymin=0,
ymax=1,
axis line style={draw=none},
ticks=none,
axis x line*=bottom,
axis y line*=left,
legend style={legend cell align=left, align=left, draw=white!15!black}
]
\node[below right, align=left, font=\bfseries]
at (rel axis cs:0.705,0.749) {$J$};
\node[below right, align=left, font=\bfseries]
at (rel axis cs:0.712,0.269) {$F$};
\node[below right, align=left, font=\bfseries]
at (rel axis cs:0.167,0.26) {$J$};
\node[below right, align=left, font=\bfseries]
at (rel axis cs:0.168,0.751) {$C$};
\end{axis}
\end{tikzpicture}%
				\caption{Phase portrait}
  			  	\label{fig:unstable_example}
  			  	\;
  			  	\end{subfigure}
  			  	 	\begin{subfigure}[t]{0.4\textwidth}
  			  	 		\centering
  			  	 		\setlength{\figW}{4.5cm}
  			  	 		\setlength{\figH}{4cm}			
  			  	 		\input{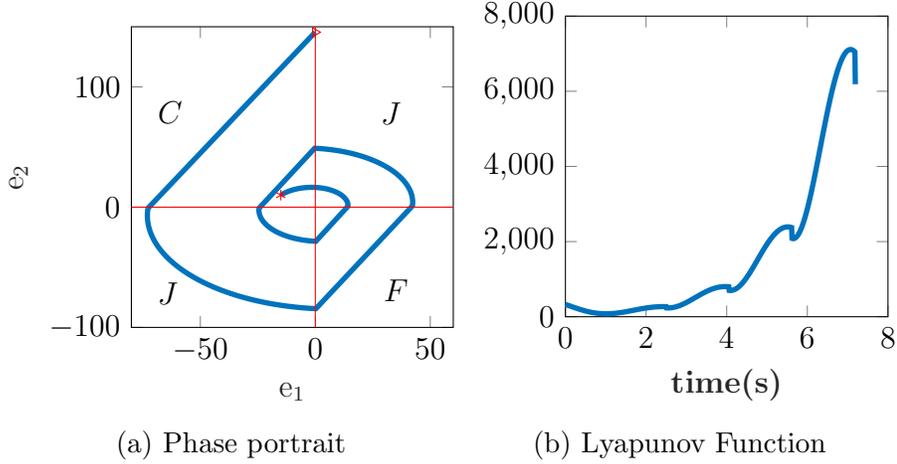}
  			  	 		\caption{Lyapunov Function}
  			  	 		\label{fig:unstable_example_lyap}
  			  	 	\end{subfigure}
  			  	 	\caption{Destabilizing the system with wrong jump moment}
  			  	 	\label{fig:unstable-sys}
  			  \end{figure}

  	An additional constraint is needed to overcome the aforementioned problem to guarantee the asymptotic stability of the error dynamics. Next Theorem addresses this 
  	issue.
 
	\begin{theorem} \label{trm:modified-stab}
	Consider the reset system
			\begin{align} \label{eq:err_dyn_time}
			\left\{
			\begin{array}{ccccc}
			\dot{e}(t) & = & Ne(t) & \text{if} & t\notin T_R\\\
			e(t^+) & = & He(t)  &  \text{if} & t\in T_R
			\end{array}
			\right.
			\end{align}
	in which
	\begin{align}
		T_R\in\{\{t_k\}_{k=0}^\mathcal{N}: t_k>t_{k-1}, \mathcal{N}\in \mathbb{N}\cup \{\infty\}\}
	\end{align}  is the reset times sequence. If the function $V(e)=e^TPe$ satisfies the inequalities (\ref{LMI-stab1}) and (\ref{LMI-stab2}) of Theorem \ref{trm:stab} and $T_R$ is such that
	\begin{align} \label{eq:lyap-trm2}
		V(e(t_k^-))\leq (1-\epsilon)V(e(\tau_k)) \quad \forall t_k\in T_R
	\end{align} 
	with $ \epsilon\in (0,1)$ and
	\begin{align} \label{TAU_K}
	\tau_k=\min\{t\in{{\rm I\!R}}^+|e(t)^TFe(t)\leq 0, \;  \; t\geq t_{k-1}\}
	\end{align}
	holds
	 and $e(t_k^-)Fe(t_k^-)\leq 0$ for all $t_k \in T_R$ 
	  then  the system (\ref{eq:err_dyn_time}) is asymptotically stable.
  
\end{theorem}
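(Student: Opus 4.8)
The plan is to show that the post-reset values of $V(e)=e^TPe$ form a geometrically decreasing sequence, and then to promote this to asymptotic stability of the full continuous-time trajectory. First I would record the two consequences inherited from Theorem \ref{trm:stab}. Applying the S-procedure to \eqref{LMI-stab1} gives
\begin{align}
\dot V(e)=e^T(N^TP+PN)e<-\lambda_f V(e)<0 \quad \text{whenever } e^TFe\geq 0,
\end{align}
so $V$ strictly decreases while the trajectory flows in $\mathcal{F}$; and, undoing the Schur complement and the change of variable $Q=PA_R$ used to obtain \eqref{LMI-stab2}, one recovers $H^TPH-\lambda_jP-\tau_jF\leq 0$, which yields
\begin{align}
V(He)\leq \lambda_j V(e) \quad \text{whenever } e^TFe\leq 0 .
\end{align}

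Next I would analyse a single inter-reset interval $[t_{k-1},t_k]$. By the minimality in the definition \eqref{TAU_K} of $\tau_k$, the solution satisfies $e(t)^TFe(t)>0$ for every $t\in[t_{k-1},\tau_k)$, so it remains in $\mathcal{F}$ and the first inequality gives $V(e(\tau_k))\leq V(e(t_{k-1}^+))$ (indeed with rate $e^{-\lambda_f(\tau_k-t_{k-1})}$). The timing hypothesis \eqref{eq:lyap-trm2} then contributes $V(e(t_k^-))\leq(1-\epsilon)V(e(\tau_k))$, and the standing assumption $e(t_k^-)^TFe(t_k^-)\leq 0$ places $e(t_k^-)$ in the jump set $\mathcal{J}$, so the jump inequality applies at the reset and gives $V(e(t_k^+))\leq\lambda_j V(e(t_k^-))$. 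Chaining the three estimates yields
\begin{align}
V(e(t_k^+))\leq \lambda_j(1-\epsilon)\,V(e(t_{k-1}^+)).
\end{align}
Setting $\mu:=\lambda_j(1-\epsilon)$, the constraints $0<\lambda_j\leq1$ and $\epsilon\in(0,1)$ force $\mu\in(0,1)$; this is precisely where the extra condition earns its keep, since it supplies a genuine contraction factor even in the degenerate case $\lambda_j=1$ where the jump alone does not help. Iterating gives $V(e(t_k^+))\leq\mu^{k}V(e(t_0^+))\to 0$, so the post-reset values decay geometrically.

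The main obstacle is upgrading this sampled convergence to asymptotic stability of the continuous-time trajectory, because \eqref{LMI-stab1} controls $\dot V$ only on $\mathcal{F}$: while the solution flows inside $\mathcal{J}$ on $[\tau_k,t_k)$ the function $V$ may transiently grow, so the reset-instant estimate does not by itself bound the peak of $V$. To close this gap I would use a uniform flow bound $N^TP+PN\leq \rho P$ (with $\rho$ the largest generalized eigenvalue of the pencil), which gives $\dot V\leq \rho V$ and hence $V(e(t))\leq e^{\rho(t-\tau_k)}V(e(\tau_k))$ throughout the jump phase. Provided the dwell time $t_k-\tau_k$ is uniformly bounded — implicit in the requirement that a reset satisfying \eqref{eq:lyap-trm2} always exists — this yields a constant $C$ with $\sup_{t\in[\tau_k,t_k]}V(e(t))\leq C\,V(e(\tau_k))\leq C\mu^{k-1}V(e(t_0^+))$. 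Combined with the monotone decrease of $V$ on each flow phase in $\mathcal{F}$, this bounds $V$ on the whole half-line by a geometrically decaying envelope, simultaneously delivering Lyapunov stability (boundedness) and attractivity and thereby completing the proof.
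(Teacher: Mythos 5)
Your contraction step at the reset instants is sound and matches the paper's own argument: minimality of $\tau_k$ keeps the trajectory in $\mathcal{F}$ on $[t_{k-1},\tau_k)$, so $V(e(\tau_k))\leq e^{-\lambda_f(\tau_k-t_{k-1})}V(e(t_{k-1}^+))$ (including the degenerate case $\tau_k=t_{k-1}^+$), and chaining with \eqref{eq:lyap-trm2} and the jump inequality $V(He)\leq\lambda_jV(e)$ on $\mathcal{J}$ gives $V(e(t_k^+))\leq\lambda_j(1-\epsilon)V(e(t_{k-1}^+))$. You also correctly identify the real obstacle: $V$ may grow while the trajectory flows inside $\mathcal{J}$ on $[\tau_k,t_k)$. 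But your fix for that obstacle has a genuine gap. The bound $\dot V\leq\rho V$ with $\rho>0$ yields a uniform peak bound only if the dwell times $t_k-\tau_k$ are uniformly bounded, and this is \emph{not} implicit in the hypotheses: condition \eqref{eq:lyap-trm2} constrains the value of $V$ at the instant $t_k^-$, not how long the trajectory takes to reach it. The trajectory may even leave the sector and re-enter (the paper explicitly notes that $\tau_k$ is then held), and nothing in \eqref{TAU_K}--\eqref{eq:lyap-trm2} prevents $t_k-\tau_k$ from growing without bound along the sequence; your constant $C=e^{\rho(t_k-\tau_k)}$ is therefore not uniform and the geometric envelope collapses.

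The missing ingredient --- and the route the paper takes --- is that the flow dynamics $\dot e=Ne$ are themselves asymptotically stable ($N$ is Hurwitz by the C-UIO design, via detectability of $(C,MA)$), so a second quadratic function $V_n(e)=e^TP_ne$ exists with $\dot V_n\leq-\lambda_nV_n$ along \emph{all} flow, irrespective of the sector. Nesting level sets, $\varepsilon(\beta)\supseteq\varepsilon_n(\gamma)\supseteq\varepsilon(1)$, traps any flowing trajectory inside an invariant $V_n$-sublevel set and hence bounds the peak of $V$ between jumps by a factor $\beta$ that is \emph{independent of the flow duration}; homogeneity extends this bound to all scales. (Equivalently, $\sup_{t\geq0}\|e^{Nt}\|<\infty$ would give the same duration-independent constant directly.) This same ingredient also repairs a second hole in your proposal: the theorem allows $\mathcal{N}$ finite, in which case your geometric sequence of post-reset values says nothing about $t$ beyond the last reset; there, convergence must come from the Hurwitz flow, which your argument never invokes.
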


\begin{proof} 	
	Note that besides the function $V(e)=e^TPe$  obtained from the Theorem \ref{trm:stab}, a Lyapunov function $V_n(e)=e^TP_ne$ exists for the nominal system such that $\dot{V}_n\leq -\lambda_n V_n$ for $\lambda_n\geq 0$ since the system is detectable.
	
	To assert asymptotic stability, we first prove the inequalities 
 		\begin{align} 
 		V(e(t)) & \leq  \beta V(e(t_{k-1}^+)) 
 		\qquad \forall t \in (t_{k-1},t_k)
 		\label{eq: as-stab0}\\
 		V(e(t_k^+))& \leq  \lambda_j(1-\epsilon)V(e(t_{k-1}^+)
 		\label{eq: as-stab1}
 		\end{align}

 	\noindent with $\beta$ defined as:
 	 	\begin{align} 	
 	 		\beta&={\max_{e\in \varepsilon_n(\gamma)} \; V(e)}
 	  	\end{align}
 	  		 in which
 		\begin{align}
	  		\gamma&={\max_{e\in \varepsilon(1)} \; V_n(e)}\\
	  		\varepsilon(\alpha)&=\{e\in {\rm I\!R}^n|   V(e) \leq \alpha\}\\ 
 	  		\varepsilon_n(\alpha)&=\{e\in {\rm I\!R}^n| V_n(e) \leq \alpha\}.
  		\end{align}

 	 To prove (\ref{eq: as-stab0}), note that from the definition of $\beta$ and $\gamma$ it can be inferred that $\varepsilon(\beta)\geq
	\varepsilon_n(\gamma)\geq\varepsilon(1)$. An illustrative example of the sets is given in Figure \ref{fig:levelsets}. Since $\dot{V}_n\leq -\lambda_n V_n$, 
	any trajectory starting in $\varepsilon_n(\gamma)$ at $t=t_0$ stays in $\varepsilon_n(\gamma)$ for all $t\geq t_0$ while flowing.
	 Therefore, it remains in $\epsilon(\beta)$ too and between two jumps can't leave it. As a result, the function $V$ may increase  $\beta$ times during the flow but not more. 
	 
	 After a jump, because of the (\ref{LMI-stab2}),  $V$ is decreasing again, and consequently, the error cannot go further than $\varepsilon(\beta)$ and remains bounded when starting in $\varepsilon(1)$. Due to the homogeneity, this reasoning can be extended to other level sets leading to \eqref{eq: as-stab0} when flowing.
	\begin{figure} 
					\centering
		\includegraphics[width=4.5cm,height=4.2cm]{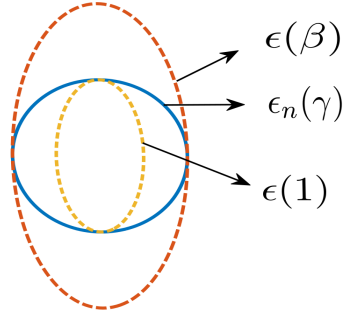}
		\caption{level sets of $V$ and $V_n$}
		\label{fig:levelsets}
	\end{figure}
	
	To prove \eqref{eq: as-stab1}, two possibilities should be considered:
	
	
	
\noindent	1. After a jump, the error trajectory is in the flow set. In this case one has
	\begin{align}
		\dot{V}\leq-\lambda_f V \qquad t \in (t_{k-1}^+, \tau_k)
	\end{align}
	from the definition of $\tau_k$ in (\ref{TAU_K}), resulting in
	\begin{align} \label{eq:sec_jump}
		V(e(\tau_k))\leq e^{-\lambda_f(\tau_k-t_{k-1})}V(e(t_{k-1}^+)).
	\end{align}
	Since by construction one has (\ref{eq:lyap-trm2}) and
	\begin{align}
		V(e(t_k^+)) & \leq \lambda_jV(e(t_k^-)) \qquad t \in (\tau_k,t_k^-)
		\label{eq:jump_decrease} 
	\end{align}
	holds from (\ref{LMI-stab2}), hence,
	\begin{align} 
		V(e(t_k^+)) \leq \lambda_j (1-\epsilon)V(e(\tau_k))
		\label{eq:sec_dec2}
	\end{align}
	Since $-\lambda_f$ is negative
		$ e^{-\lambda_f(\tau_k-t_{k-1})} \leq 1$ and therefore,
		(\ref{eq:sec_dec2}) and (\ref{eq:sec_jump}) result in
		\begin{align} \label{eq:jump_dec}
			V(e(t_k^+)) & \leq \lambda_j (1-\epsilon)V(e(\tau_k))  \leq\lambda_j(1-\epsilon)e^{-\lambda_f(\tau_k-t_{k-1})}V(e(t_{k-1}^+)) \nonumber\\
			& \leq\lambda_j(1-\epsilon)V(e(t_{k-1}^+))
		\end{align}
		It should be noted that in this case there is no Zeno solution since after a jump there is always flowing.
		
		2. If the error trajectory jumps in the jump sector.
		In this case, $\tau_k=t_{k-1}^+$ and then $e^{-\lambda_f(\tau_k-t_{k-1})}=1$. Hence from  (\ref{eq:jump_dec})  $$V(e(t_k^+))\leq\lambda_j(1-\epsilon)V(e(t_{k-1}^+))$$
	Principally, in this case there could be Zeno solution. However, since $\epsilon>0$ and from (\ref{eq:lyap-trm2}), $t_k^->\tau_k$. Hence there is always flowing before next jump  and therefore Zeno cannot happen.		
		
From (\ref{eq: as-stab0}) and (\ref{eq: as-stab1}) $V(e(t))\leq\beta V(e(0))$ is true which implies stability. Moreover, (\ref{eq: as-stab1}) implies attractivity and from this asymptotic stability is inferred.

\end{proof} 
	This Theorem shows that at the reset moment $t_k$ the error trajectory should be inside the sector ($e(t_k^-)Fe(t_k^-)\leq 0$) and the value of the function $V$ should be less than its value at the instant $\tau_k$. Furthermore, $T_R$ is a set of strictly increasing instants of all jumps.
\begin{remark}
	The proof of the Theorem \ref{trm:modified-stab} is valid for both finite and infinite $\mathcal{N}$. If $\mathcal{N}$ is finite, from (\ref{eq: as-stab0}) and (\ref{eq: as-stab1}) the system states remains bounded and after the last jump since the nominal error dynamics is asymptotically stable, the error will go to zero asymptotically. Similarly, if $\mathcal{N}$ is infinite, since (\ref{eq: as-stab0}) and (\ref{eq: as-stab1}) hold for every $t_k, k\rightarrow \infty$, the system \eqref{eq:err_dyn_time} is asymptotically stable.
\end{remark}
\begin{remark}
	Inequality (\ref{eq:lyap-trm2}) guarantees that before the next jump there is a positive time interval of flow which means that there is no Zeno solution and system is well-posed. 
\end{remark}


Notice that	$\tau_k$ is the first instant that the error trajectory enters the jump sector after the $k-1^{th}$ jump. If it leaves the sector without any jump the $\tau_k$ is held until a jump happens. After that, a new value for the $\tau_k$ should be considered. Moreover, $t_k$ is obtained when the error trajectory is inside the jump sector and the jump condition is satisfied. It should be noted that if after a  jump, the error trajectory is again inside the jump sector then $\tau_{k+1}$ is equal to $t_{k}$.


	In the Theorem \ref{trm:modified-stab}, a reset law is proposed to guarantee the boundedness and convergence of a given trajectory. Since many trajectories generated by $e_{v_i}$ as initial condition are considered, it is necessary to impose that not only the reset law of Theorem \ref{trm:modified-stab} is satisfied, but also $e(t)\subseteq \textbf{conv} \{e_{v_i}(t)\}$. The next Theorem addresses this issue.

\begin{theorem} \label{prop:convex-comb}
Suppose Assumption \ref{ass:convex-comb} holds and with $V$ satisfying \eqref{LMI-stab1} and \eqref{LMI-stab2}, the reset system \eqref{eq:err_dyn_time} with reset times sequence $T_R$ is such that
\begin{align}
	\tau_{k_i} =\min\{t\in \mathbb{R}^+|e_{v_i}(t)^TFe_{v_i}(t)\leq 0, \quad t\geq t_{k-1}\}\\
		V(e_{v_i}(t_k^-))\leq (1-\epsilon)V(e_{v_i}(\tau_{k_i})) \quad \forall t_k\in T_R, \epsilon>0
		\label{eq:jump_all}
\end{align}
for all $i=1,...,N_v$, then the reset system \eqref{eq:err_dyn_time} is asymptotically stable.
\end{theorem}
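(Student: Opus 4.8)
The plan is to reduce the stability of the true error trajectory $e(t)$ to the stability of the finitely many vertex trajectories $e_{v_i}(t)$, which is already granted by Theorem~\ref{trm:modified-stab}, and then transfer convergence through the convex-combination structure. First I would exploit linearity of both the flow map and the jump map in \eqref{eq:err_dyn_time}. During flow, the solution of $\dot e=Ne$ is $e(t)=e^{N(t-s)}e(s)$, and at a reset the state is mapped by the linear map $e\mapsto He$; crucially all trajectories share the \emph{same} reset sequence $T_R$. Hence, writing the initial condition from Assumption~\ref{ass:convex-comb} as $e(0)=\sum_i \alpha_{v_i}e_{v_i}$ with $\alpha_{v_i}\ge0$ and $\sum_i\alpha_{v_i}=1$, I would argue by induction on the reset intervals that the same convex combination is propagated for all time, $e(t)=\sum_i\alpha_{v_i}e_{v_i}(t)$, because $e^{N(t-s)}$ and $H$ commute with convex combinations and the coefficients $\alpha_{v_i}$ never change. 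This yields the first design requirement, $e(t)\subseteq\textbf{conv}\{e_{v_i}(t)\}$, for every $t\in{\rm I\!R}^+$.

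Next I would verify that each individual vertex trajectory $e_{v_i}(t)$ satisfies the hypotheses of Theorem~\ref{trm:modified-stab}. The common Lyapunov function $V(e)=e^TPe$ satisfies \eqref{LMI-stab1} and \eqref{LMI-stab2} by assumption; the per-vertex entry time $\tau_{k_i}$ plays exactly the role of $\tau_k$ in \eqref{TAU_K}; and the decrease condition \eqref{eq:jump_all} is precisely \eqref{eq:lyap-trm2} written for the $i$-th trajectory. Together with the requirement (built into the construction of $T_R$, since a reset is triggered only when every vertex has entered the jump sector) that $e_{v_i}(t_k^-)^TFe_{v_i}(t_k^-)\le 0$ for all $i$ and all $t_k\in T_R$, the hypotheses of Theorem~\ref{trm:modified-stab} hold separately for each $i$. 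Consequently every vertex trajectory is asymptotically stable: it is bounded and $e_{v_i}(t)\to0$.

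Finally I would combine the two facts. Since the convex weights $\alpha_{v_i}$ are fixed, the triangle inequality gives $\|e(t)\|\le\sum_i\alpha_{v_i}\|e_{v_i}(t)\|$, so boundedness and convergence of each $e_{v_i}(t)$ transfer immediately to $e(t)$, establishing asymptotic stability of \eqref{eq:err_dyn_time}. The main obstacle I anticipate is not the final limit step but the invariance of the convex combination across the \emph{jumps}: one must be sure that all trajectories, including $e(t)$, reset simultaneously at the instants of $T_R$, and that the decrease estimates of Theorem~\ref{trm:modified-stab} hold \emph{uniformly} over $i$ (the constant $\beta$ and the contraction factor $\lambda_j(1-\epsilon)$ produced in that proof depend only on $P$, $N$, $F$, $\lambda_f$, $\lambda_j$ and $\epsilon$, not on the particular trajectory). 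I would also stress that the non-convexity of the jump set $\mathcal{J}=\{e\,|\,e^TFe\le0\}$ is harmless here: the true trajectory $e(t)$ need never lie in $\mathcal{J}$, since its resets are dictated by the vertices rather than by its own membership in $\mathcal{J}$, so convergence rests solely on the convex-hull containment together with the vertex-wise stability.
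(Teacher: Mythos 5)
Your proposal follows essentially the same route as the paper's proof: you propagate the convex combination $e(t)=\sum_i \alpha_{v_i} e_{v_i}(t)$ through flows and jumps using linearity of $e^{N(t-s)}$ and $H$ together with the shared reset sequence $T_R$, then invoke Theorem~\ref{trm:modified-stab} for each vertex trajectory and transfer boundedness and convergence through the convex hull. Your added observations (explicit induction across reset intervals, uniformity of the contraction constants over $i$, and the vertex-wise sector condition being enforced by the construction of $T_R$) are careful refinements of the paper's argument rather than a different approach, and they in fact tidy up a stray $e^{Nt}$ factor appearing in the paper's jump-step display.
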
 

\begin{proof}
	First note that Assumption \ref{ass:convex-comb} implies that $e(0)$ is a convex combination of $v_i$.
	
	if $t\in (t_{k-1},t_k)$, hence $t\notin T_R$ then
	\begin{align*}
	\dot{e}(t) & =Ne(t)\\
	e(t)   & =e^{Nt}e(t_{k-1}^+)\\
	e(t)   & =e^{Nt}(\alpha_{v_1} e_{v_1}(t_{k-1}^+)+\alpha_{v_2} e_{v_2}(t_{k-1}^+)+...+\alpha_{v_i} e_{v_i}(t_{k-1}^+))\\
	e(t)   &=\alpha_{v_1} (e^{Nt}e_{v_1}(t_{k-1}^+))+\alpha_{v_2} (e^{Nt}e_{v_2}(t_{k-1}^+))+...+\alpha_{v_i} (e^{Nt}e_{v_i}(t_{k-1}^+))
	\end{align*}
	Therefore, $e(t)$ is also a convex combination of the error trajectories. Similarly 	if $t=t_k$, thus $t\in T_R$ then
	\begin{align*}
	e(t_k^+) & =He(t_k^-)\\
	e(t_k^+) & =H(\alpha_{v_1} (e^{Nt}e_{v_1}(t_k^-))+\alpha_{v_2} (e^{Nt}e_{v_2}(t_k^-))+...+\alpha_{v_i} (e^{Nt}e_{v_i}(t_k^-)))\\
	e(t_k^+) & =\alpha_{v_1} (He^{Nt}e_{v_1}(t_k^-))+\alpha_{v_2} (He^{Nt}e_{v_2}(t_k^-))+...+\alpha_{v_i} (He^{Nt}e_{v_i}(t_k^-))
	\end{align*}
Moreover, according to Theorem \ref{trm:modified-stab} since each error trajectory $e_{v_i}(t)$ converges to zero and since $e(t)$ is in the convex hull of the all trajectories $e_{v_i}(t)$  it will converges to zero also and this completes the proof.
\end{proof}
This Theorem states that the jump happens for all the trajectories $e_{v_i}(t)$ at the same instant $t_k$ in which the jump condition \eqref{eq:jump_all} is satisfied. Hence, it can be seen that whether flowing or jumping the real error is a convex combination of the error trajectories and converges to zero.
	\section{Simulation} \label{sec:simulation}
	In order to show the effectiveness of the proposed method a numerical example is considered \cite{chen2012robust}. For a fair comparison, C-UIO is designed to be optimal (using LQR method), the goal is to show that the proposed R-UIO outperforms the optimal C-UIO.

	Consider the system \eqref{general-sys} with:
	\[
		A=\begin{bmatrix}
		 -1 & 1 & 0\\
		 -1 & 0 & 0\\
		  0 &-1 &-1
		\end{bmatrix}, 
		B=\begin{bmatrix}
		0 \\
		0 \\
		1
		\end{bmatrix},
		C=\begin{bmatrix}
			1 & 0 & 0\\
			0 & 0 & 1
		\end{bmatrix},
		D=\begin{bmatrix}
		-1 \\
		0 \\
		0
		\end{bmatrix},
	\]
	using the LQR method with weighting matrices equal to identity, the observer gain, $K$ is obtained as:
	
	\[
			K=\begin{bmatrix}
			  1.2926  & 0.3638\\
			  -0.7654 &  -1.0076\\
			  0.3638 &  0.9830
			\end{bmatrix}\]		
choosing 
	\[ 
		Y=\begin{bmatrix}
		1  &       1 \\
		1  &       1 \\
		1  & 	   1 
		\end{bmatrix} 
	\] 
and using \eqref{UIO-Cond}, the observer parameters can be calculated as:
		\[
		E=\begin{bmatrix}
		 -1  &   1\\
		  0  &   1\\
		  0  &   1
		\end{bmatrix},
		N=\begin{bmatrix}
		-1.2926  &  -1.0000 &  -1.3638\\
		-0.2346  &  -1.0000 &  0.0076\\
		-0.3638  & -2.0000  & -2.9830
		
		\end{bmatrix}, 
		G=\begin{bmatrix}
		1 \\
		1 \\
		2
		\end{bmatrix}\]
		\[
		L=\begin{bmatrix}
         0       &4.0202\\
         -1.0000 &  0.2194\\
         0        & 6.3297
		\end{bmatrix}, \\
		M=\begin{bmatrix}
		   0  &   0  &   1\\
		   0  &   1  &   1\\
		   0  &   0  &   2
		
		\end{bmatrix}, 		
		\]
with these parameters, the design of optimal C-UIO is completed.

 Now, to obtain the matrices $P$, $F$ and $A_R$, the ideal R-UIO should be designed by solving the inequalities \eqref{LMI-stab1} and \eqref{LMI-stab2} of Theorem \ref{trm:stab}. It worth noting that, $\lambda_f$, $\lambda_j$, $\tau_f$ and $\tau_j$ are unknown and result in multiplication of parameters. Therefore, to solve these inequalities, a change of variable is used to remove one of them. Consider $\tau_fF=\bar{F}$ thus, $ \tau_jF$ can be replaced with $ \frac{\tau_j}{\tau_f}\bar{F}=\bar{\tau_j}\bar{F}$. It is the same as letting $\tau_f=1$ and solving the inequalities. To deal with the other nonlinearities, a grid is  considered  for $\lambda_f$, $\lambda_j$ and $\tau_j$, then the inequalities are solved at each point of the grid to obtain a feasible solution. Thus, many feasible but sub-optimal solutions may be obtained and a criterion is needed to choose the best one.

Consider $\lambda_{f_{i_f}}=0.1+i_f\Delta_{\lambda_f}$ ,$\Delta_{\lambda_f}=1$, $\lambda_{j_{i_j}}=0.1+i_j\Delta_{\lambda_j},\Delta_{\lambda_j}=0.1$,  $\tau_{j_{i_\tau}}=0.1+i_\tau\Delta_{\tau_j},\Delta_{\tau_j}=1$ and $i_f,i_j,i_\tau<10$ as natural numbers. Now, the inequalities in  Theorem \ref{trm:stab} with the remaining parameters are linear and can be solved using LMI techniques. Solving the LMIs result in 22 feasible solutions which are shown in Table \ref{feas-sol}. As can be seen from Figure \ref{sector} the sector size is directly related to the $\lambda_f$ and $\lambda_j$. This means that, in a fixed $\tau_j$, bigger $\lambda_f$ and $\lambda_j$ result in a bigger sector which in turn increases the jump probability.

\begin{figure} 
			\includegraphics[width=\textwidth,height=8cm]{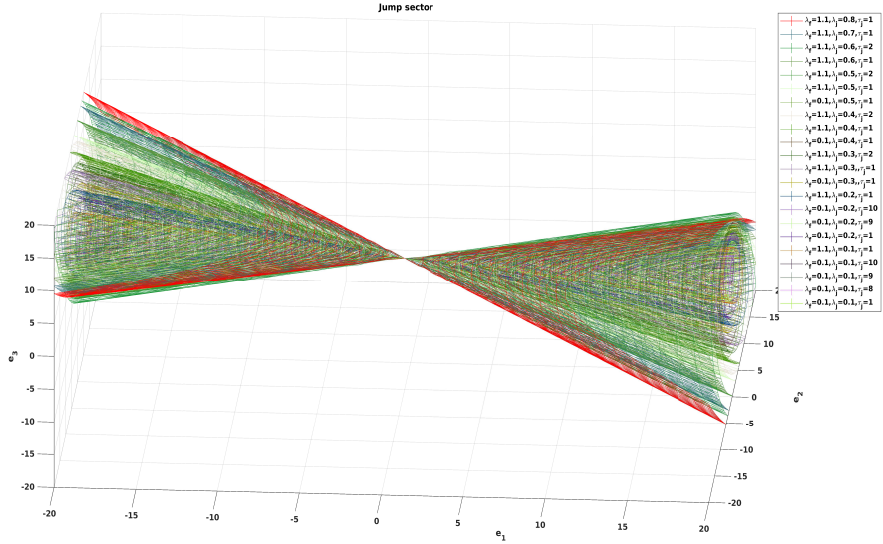}
			\caption{Jump sector}
			\label{sector}
\end{figure}

A Mont-Carlo simulation has been run for all the feasible solutions with two performance criteria  to evaluate the feasible solutions and choose the best. The first one is the $L_2$ norm of the error and the second one is the settling time (2\%) of the error. The fifth and  sixth rows of Table \ref{feas-sol} show the percentage of improvement (more than 1 percent) of the mentioned measures in the R-UIO in comparison with the C-UIO. Table \ref{feas-sol} indicates that there are 22 feasible solutions and the best solution in the sense of improving both $L_2$ norm and settling time of the error is the last entry. Therefore, the correspondent $\lambda_f=1.1$, $\lambda_j=0.8$ and $\tau_j=1$ is chosen and the simulation is continued. In this case, the related parameters will be obtained as:

		\[
			F=\begin{bmatrix}
			 -0.4090   &    0.2892     & 0.4246\\
		 	  0.2892   &     0.7555    & 0.7758\\
		 	  0.4246   &     0.7758    & 0.9560
			\end{bmatrix}
		\]
		\[
		P=\begin{bmatrix}
		   1.1029  &  -0.1262   & -0.3658\\
		    -0.1262  &  1.1057   &0.1314\\
		   -0.3658 &   0.1314   & 0.6295
		 \end{bmatrix},
	 	A_R=\begin{bmatrix}
	 	-0.0009  &  1.0000   & 0.0000\\
	 	 0.1295  &  0.3264   & 0.0000\\
	 	-0.0031 &   2.0019   & 0.0000
		 	\end{bmatrix},
 		\]	
 		 \begin{table}
 		 	\caption{choosing $\lambda_f$, $\lambda_j$ and $\tau_j$ }
 		 	\begin{adjustbox}{width=\textwidth}
 		 		\begin{tabular}{|c|c|c|c|c|c|c|c|c|c|c|c|c|c|c|c|c|c|c|c|c|c|c|}
 		 			\hline
 		 			$Feas. Sol$ & 1 & 2 & 3 & 4 & 5 & 6 & 7 & 8 & 9 & 10 & 11 & 12 & 13& 14 & 15 & 16 &17&18&19&20&21&22\\ 	\hline
 		 			$\lambda_f$ & 0.1 & 0.1 & 0.1 & 0.1  & 1.1 & 0.1 & 0.1 & 0.1  & 1.1 & 0.1 & 1.1 & 1.1 & 0.1& 1.1 &1.1&0.1&1.1&1.1&1.1&1.1&1.1&1.1\\\hline
 		 			$\lambda_j$ & 0.1 & 0.1 & 0.1 & 0.1  & 0.1 & 0.2 & 0.2 & 0.2  & 0.2 & 0.3 & 0.3 & 0.3 & 0.4& 0.4& 0.4& 0.5& 0.5& 0.5& 0.6& 0.6& 0.7& 0.8\\ \hline
 		 			$\tau_j$ & 1    &   8 &   9 & 10 &1  &  1 &  9 & 10 &1 &1 &1 &2&1&1&2&1&1&2&1&2&1&1 \\\hline
 		 			$||e||_2$& 2.3  & 0.1 & 0.1 & 0.2 &  2.7& 8.7 &  1.1&    1.1&  14.4& 12.2  & 27.6 & 19.3 & 19.3  & 		   39.8&25.8& 23.9&45.6&35.3& 53.8&42.2&62.6&67.5\\\hline	
 		 			$T_{settling}$ &6.7 &0.1 & 0.1&0.2 &19.9 &23.4 & 3.7& 1.5& 56.7 & 33.9 &  76.7 & 54.1  & 50.5 &95.9&71&56.9&98.2& 88.3&99.3&96.6&99.6&99.8 \\\hline	
 		 		\end{tabular}	\label{feas-sol}
 		 	\end{adjustbox}
 		 \end{table}
 		 \begin{remark}
 		 	The matrix $F$ should be chosen such that it is neither positive definite nor negative definite in order to represent a sector.
 		 \end{remark}
 		 	
 The next step is to define a suitable reset law which results in more improvement in the settling time and $L_2$ norm of the error. It should be reminded that in this example the state $x_2$ is not measured but its bound are known and therefore the boundary error trajectories are used. Notice that $e^TFe\leq0$ is equivalent to $\max(e_{v_1}^T(t)Fe_{v_1}(t),e_{v_2}^T(t)Fe_{v_2}(t))<0$ and we can replace this part of the condition, which is additional to the condition on $\tau_k$, with others defined in (\ref{eq:reset-laws}). Although the first one is the only reset law for which asymptotic stability is proved, it is slightly conservative thus to relax it some other reset laws are defined without stability proof.

		\begin{align} \label{eq:reset-laws}
	\left\{ 
	\begin{array}{ccccc}
			1-& \max(e_{v_1}^T(t)Fe_{v_1}(t),e_{v_2}^T(t)Fe_{v_2}(t)) & < & 0&\\
			2-& e_{v_1}^T(t)Fe_{v_1}(t)+e_{v_2}^T(t)Fe_{v_2}(t) &< &0&\\
		    3-& \max(e_{v_1}^T(t)Fe_{v_1}(t),e_{v_2}^T(t)Fe_{v_2}(t)) & < &\quad ||[e_{v_1}(t),e_{v_2}(t)]||_2e^{-t}&\\
		    4-& \left\{ \begin{array}{cccc}
				  e_{v_1}^T(t)Fe_{v_1}(t)+e_{v_2}^T(t)Fe_{v_2}(t)  \\ 
				     e_{v_1}^T(t)Fe_{v_1}(t)+e_{v_2}^T(t)Fe_{v_2}(t)  
		    	\end{array} 
		    	 \right.  
		    	 & \begin{array}{ccc}
				  &< &	  \\
				  &< & 
				 \end{array} 
				 &\begin{array}{ccc}
				  ||[e_{v_1}(t), e_{v_2}(t)]||_2\\
				 0
				\end{array} 
				&\begin{array}{ccc}
				k=1\\
				k\neq 1
				\end{array} 		    	 
	\end{array}
	\right. 
			\end{align}
			
	 The second reset law comes from the fact that the summation of the error trajectories should be inside the sector, not both of them. It means that reset may happen when only a single error trajectory is inside the sector not all of them.
			 
	 In the third reset law, the jump sector is expanded according to the norm of the error and it is exponentially decreasing to reach the first reset law. It is motivated by observing that a bigger jump sector and hence more jump probability may improve the response more. Finally, the 4th reset law is the same as the second except that the first reset may happen more quickly and leads to more improvement in the results.

	As mentioned before, since $e_2$ is not available in this example, assume that it lies in an uncertainty interval thus, use the two boundary trajectories instead of the unknown parameter and construct the reset laws based on these trajectories (Figure \ref{bound-traj}).
	
	 \begin{table}[b]
	 \caption{The effect of different reset laws on performance indices}
	  \begin{adjustbox}{width=\textwidth}
	\begin{tabular}{|c|c|c|c|c|c|c|c|c|c|c|c|c|}
		\hline
		\multirow{2}{*}{Reset law} & \multicolumn{2}{c|}{0\%-20\%} & 
		\multicolumn{2}{c|}{20\%-40\%} & 	\multicolumn{2}{c|}{40\%-60\%} & \multicolumn{2}{c|}{60\%-80\%} & \multicolumn{2}{c|}{80\%-100\%} & \multicolumn{2}{c|}{average}\\
		\cline{2-13}
	    & $||e||_2$ & $T_{stl}$ & $||e||_2$ & $T_{stl}$ & $||e||_2$ & $T_{stl}$ &$||e||_2$ & $T_{stl}$  &$||e||_2$ & $T_{stl}$ &$||e||_2$ & $T_{stl}$\\
	    	\hline
	   
	   1& 57.20 &39.30& 24.90 & 52.00 	& 8.60 & 8.30& 4.90 &0.40& 4.40&0 & 21.33 & 24.03\\\hline
	   2& 54.30 & 24.40&  23.80 & 67.20& 12.30 & 7.00& 6.00 &1.10& 3.60&0.3 & 23.19 & 25.51\\\hline
	   3&  45.00 & 1.50&  32.10 & 60.20&10.30&36.40& 7.10 &1.80&  5.30&0 &27.73  & 38.80\\\hline
	   4& 54.20 &30.40& 23.80 & 57.80&  9.90 & 9.60& 5.40 &1.20&   5.80&0.70 &  26.28& 25.90\\\hline
	\end{tabular}\label{performance-index-percent}
		 \end{adjustbox}
	\end{table}

The results of the reset laws in \eqref{eq:reset-laws} are shown in Table \ref{performance-index-percent}. In this table, the improvement of different reset laws is divided into 5 subcategories to check the improvement amount. For example in the first reset law, 57.2\% of the improvements of the $L_2$ norm are less than 20 \% and so on. The last column means that for example with the first reset condition regarding all the initial conditions, 21.33\% improvement in $||e||_2$ and also 24.03\% improvement in the settling time is achieved. Based on this table, we can see that the 3rd reset law on average is better than the others. It should be mentioned that in all cases the initial condition of the observer is zero and the initial condition of the system is a random number such that $||x||_\infty\leq20$.


Using the first reset condition, the state estimation in both C-UIO and R-UIO are shown in left column while the estimation errors are shown in the right column of Figure \ref{sts-reset-law-max}. As can be seen, after the first reset the estimation error is reduced significantly. Figure \ref{JumpL-lay-error-fun-max} shows the correspondent jump sector, Lyapunov function and the root of the square error. In Figure \ref{Sec_err_traj_max} there are two error boundary trajectories associated with the two outer bounds of initial condition and the starting point is marked with a star. In the first reset condition, when both of the trajectories are inside the sector, i.e $\max(e_{v_1}^TFe_{v_1},e_{v_2}^TFe_{v_2}) \leq 0$ and also the inequality \eqref{eq:lyap-trm2} is satisfied the jump will happen. With this jump, as it is depicted in Figure \ref{lyap_max} and \ref{RSE_max} a significant decrease in the Lyapunov function  and in the square root of error is seen.

%
%

A quantitative comparison of these 4 reset laws with the same initial condition ($x_1=-5,x_2\in [-5,5],x_3=10$) is done in Table \ref{tble-ISE}. In this table, the first instant of the jump, $L_2$ norm and settling time (2\%) of error is calculated. As the table shows, all the proposed reset laws outperforms the C-UIO which in turn, demonstrates the effectiveness of exploiting the reset in the UIO.

	\begin{figure}
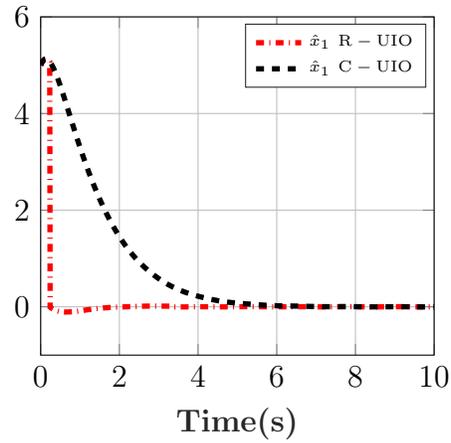
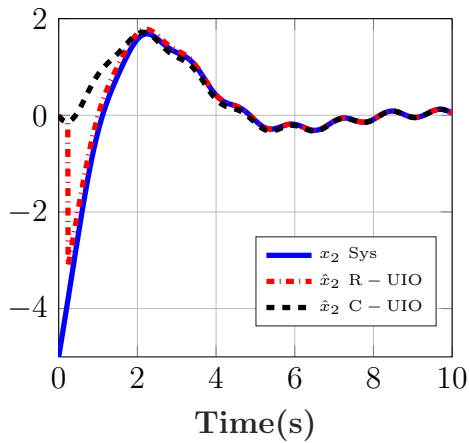
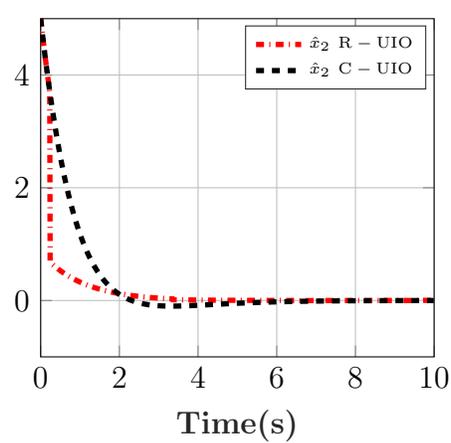
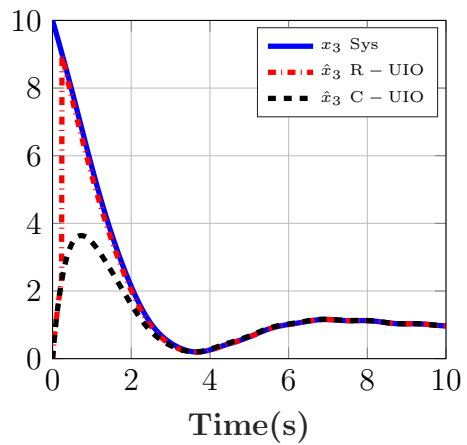
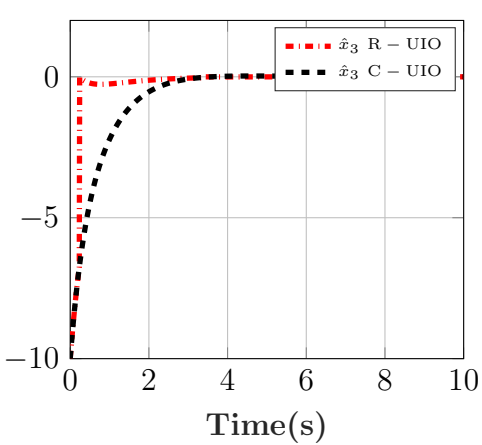

		\begin{subfigure}[b]{2.5in}
			\centering
			\setlength{\figW}{5.5cm}
			\setlength{\figH}{4.5cm}			
			\input{X1.tikz}
			\caption{first state of the plant}
		\end{subfigure}
		\begin{subfigure}[b]{2.5in}
			\centering 
			\setlength{\figW}{5.5cm}
			\setlength{\figH}{4.5cm}			
			\input{X1_err.tikz}
			\caption{first state estimation error}		
		\end{subfigure}	\\
		\begin{subfigure}[b]{2.5in}
			\vspace{.2cm}
			\centering
			\setlength{\figW}{5.5cm}
			\setlength{\figH}{4.5cm}			
			\input{X2.tikz}
			\caption{second state of the plant}
		\end{subfigure}
		\begin{subfigure}[b]{2.5in}
			\centering
				\setlength{\figW}{5.5cm}
				\setlength{\figH}{4.5cm}			
				\input{X2_err.tikz}
			\caption{second state estimation error}
			
		\end{subfigure}	
		
		\begin{subfigure}[b]{2.5in}
			\vspace{.2cm}
			\centering
				\setlength{\figW}{5.5cm}
				\setlength{\figH}{4.5cm}			
				\input{X3.tikz}
			\caption{third state of the plant}
		\end{subfigure}
		\begin{subfigure}[b]{2.5in}
			\centering
		\setlength{\figW}{5.5cm}
		\setlength{\figH}{4.5cm}			
		\input{X3_err.tikz}
			\caption{third state estimation error}
		\end{subfigure}			
		\caption{State estimation (left column) and its error  (right column) with the first reset law}
		\label{sts-reset-law-max} 
	\end{figure} 
	
	\begin{figure} 
		\begin{subfigure}[b]{4.5in}
			\includegraphics[width=9.4cm,center]{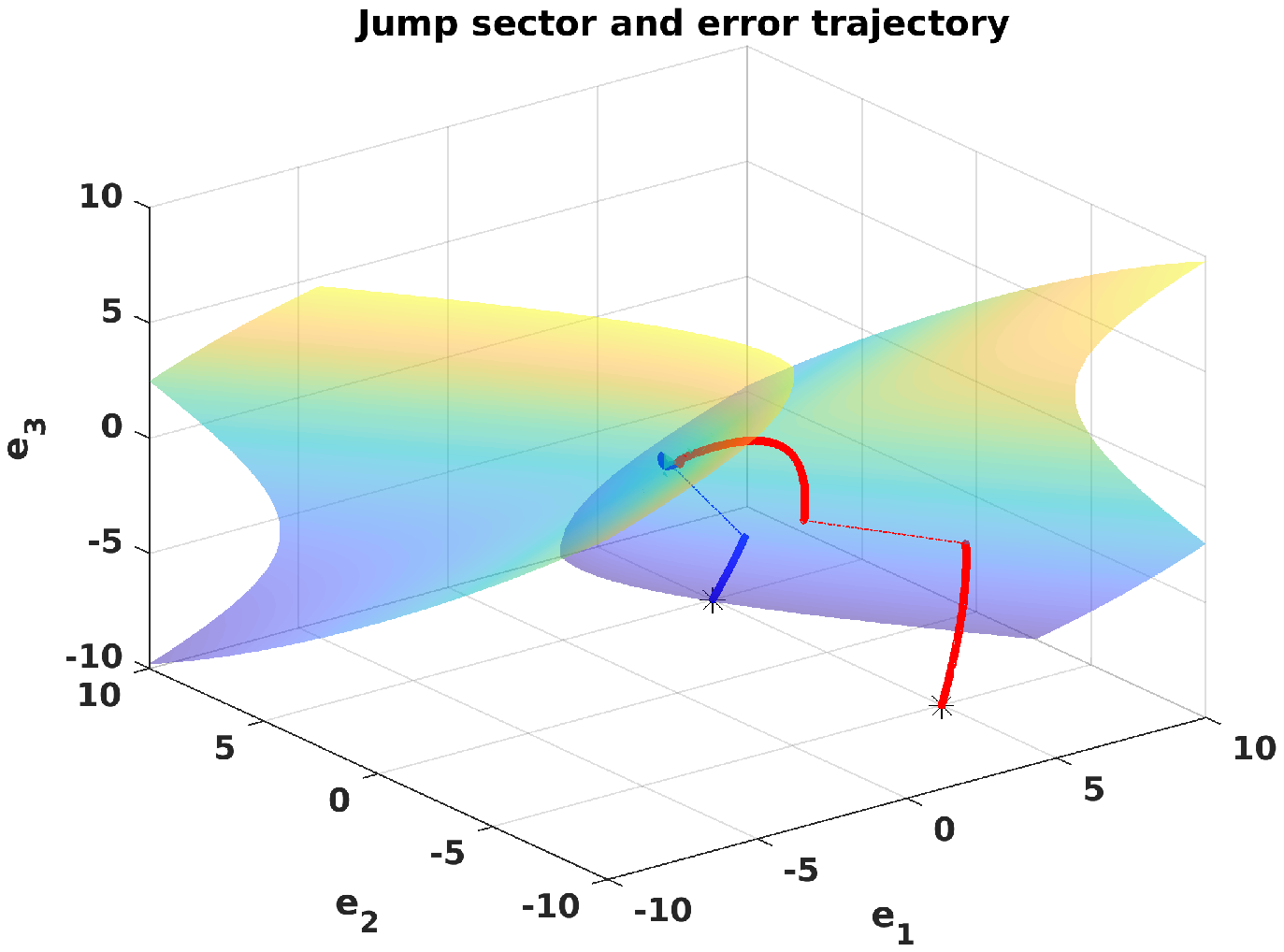}
			\caption{Jump sector and error trajectory}
			\label{Sec_err_traj_max}
		\end{subfigure}\\
		\begin{subfigure}[t]{3in}
				\setlength{\figW}{6.5cm}
				\setlength{\figH}{5cm}
				
				\input{lyap2nd.tikz}
			\caption{Lyapunov Function}
			\label{lyap_max}
		\end{subfigure}
		\begin{subfigure}[t]{3in}
			\setlength{\figW}{6.5cm}
			\setlength{\figH}{5cm}
			\input{RSE2nd.tikz}
			\caption{Root Square Error }
			\label{RSE_max}
		\end{subfigure}
		\caption{Jump sector, Lyapunov function and error function}
		\label{JumpL-lay-error-fun-max}
	\end{figure}

			\begin{table}
				\caption{Comparison of different reset laws}
				\label{tble-ISE}
				\centering
				 \begin{adjustbox}{width=\textwidth}
				\begin{tabular}{|c|c|c|c|c|c|}				
					\hline
					  &1st reset law & 2nd reset law & 3rd reset law & 4th reset law&  Conv-UIO\\
					\hline
					$T_{1stReset}(s)$&0.236 &  0.182 & 0.193& 0.148  & - \\
						\hline
					$\sqrt{\int_{0}^{\infty}{e^Tedt}}$ & 5.1992  &  4.8050  & 5.1783 & 4.3270 &8.1944\\
					\hline
					Settling time (2\%) & 3.356 & 3.106 &4.229&  3.071 & 6.715\\
					\hline
					
				\end{tabular}
			\end{adjustbox}	
			\end{table}
	
	\section{Conclusion} \label{sec:conclusion}
	In this paper, Reset Unknown Input Observer is proposed in which the states of the observer reset to a suitable value based on a time-dependent reset law. Design starts with an ideal case and a jump sector is obtained. Then, in non-ideal case we used the boundary error trajectories to determine the reset times. Moreover, we analyzed the stability and convergence to show that the estimation error will converge to zero asymptotically. Furthermore,  we exploited a simulation example to demonstrates the efficiency of using the reset in the UIO to decrease the $L_2$ and settling time of estimation error. Moreover, to relax the conservatism of the proposed reset law, we presented some other reset laws. Although such reset laws may perform nicely in some cases, there is no rigorous stability proof for them.  The focus of our future work is on developing less conservative reset laws with stability proof using  the  presented R-UIO. 

\end{document}